\documentclass[11pt]{article}
\usepackage[letterpaper,
            left=1in,
            right=1in,
            top=1in,
            bottom=1in,
            footskip=.25in]{geometry}
\usepackage{float}
\usepackage{algorithm}
\usepackage{algorithmicx}
\usepackage[noend]{algpseudocode}
\usepackage{complexity}
\usepackage{graphics}
\usepackage{array}
\usepackage{changepage}
\usepackage{placeins}
\usepackage{multirow}
\usepackage{booktabs}
\usepackage{amsthm,amsmath}
\usepackage{hyperref}
\usepackage[capitalise]{cleveref}

%My defined macros
\newclass{\USPACE}{USPACE}
\newclass{\TIUSP}{TIME{-}USPACE}
\newclass{\pUSP}{poly{-}USPACE}
\newclass{\qNC}{quasi{-}\NC}
\newclass{\coUL}{coUL}
\newclass{\NLC}{\NL{-}complete}
\newlang{\UTS}{Seg}
\newlang{\BTS}{BiSeg}
\newlang{\Reach}{Reach}
\newlang{\kReach}{k{-}Reach}

\newtheorem{theorem}{Theorem}[section]

\newtheorem{lemma}[theorem]{Lemma}
\newtheorem{remark}{Remark}

\usepackage{amssymb}
\bibliographystyle{plainurl}
\title{Trading Determinism for Time: The $\kReach$ Problem}
\author{Ronak Bhadra, Raghunath Tewari}

\begin{document}
\maketitle

%% Title, authors and addresses

%% use the tnoteref command within \title for footnotes;
%% use the tnotetext command for the associated footnote;
%% use the fnref command within \author or \address for footnotes;
%% use the fntext command for the associated footnote;
%% use the corref command within \author for corresponding author footnotes;
%% use the cortext command for the associated footnote;
%% use the ead command for the email address,
%% and the form \ead[url] for the home page:
%%
%% \title{Title\tnoteref{label1}}
%% \tnotetext[label1]{}
%% \author{Name\corref{cor1}\fnref{label2}}
%% \ead{email address}
%% \ead[url]{home page}
%% \fntext[label2]{}
%% \cortext[cor1]{}
%% \address{Address\fnref{label3}}
%% \fntext[label3]{}

%% Use \dochead if there is an article header, e.g. \dochead{Short communication}
%% \dochead can also be used to include a conference title, if directed by the editors
%% e.g. \dochead{17th International Conference on Dynamical Processes in Excited States of Solids}

\begin{abstract}
Kallampally and Tewari showed in 2016 that there can be a trade-off between determinism and time in space-bounded computations. This they did by describing an unambiguous non-deterministic algorithm to solve Directed Graph Reachability that requires $O(\log^2 n)$ space and simultaneously runs in polynomial time. Savitch's 1970 algorithm that solves the same problem deterministically also requires $O(\log^2 n)$ space but doesn't guarantee polynomial running time and hence the trade off. We describe a new problem for which we can show a similar trade off between determinism and time.

We consider a collection $P$ of $f$ directed paths. We show that the problem of finding reachability from one vertex to another in the union $G$ of these path graphs via a path that switches amongst the paths in $P$ at most $k$ times can be solved in $O(k\log f+\log n)$ space but the algorithm doesn't guarantee polynomial runtime. On the other hand, we also show that the same problem can be solved by an unambiguous non-deterministic algorithm that simultaneously runs in $O(k\log f+\log n)$ space and polynomial time. Since these two algorithms are not dependent on Savitch, therefore this example sheds new light on how such a trade off between determinism and time happens in space-bounded computations and makes the phenomenon less elusive.
%% Text of abstract
\end{abstract}

%%
%% Start line numbering here if you want
%%
% \linenumbers

%% main text
\section{Introduction}
\label{intro}

Unambiguous computations are a restriction of nondeterministic computations, where the nondeterministic machine accepts its input along at most one computation path. In other words it is a nondeterministic machine such that, on an instance belonging to the language, the machine has exactly one accepting computation, and on an instance not belonging to the language, the machine has no accepting computations. By definition, unambiguous computations lie between general nondeterministic computations and deterministic computations. It is an important question in computational complexity theory, whether any of these two containments is proper or not. In this paper, we study unambiguity in the context of space bounded computations. Unambiguous logspace (\UL) is a subclass of {\NL}, consisting of problems decidable by an {\NL} machine that has at most one accepting computation on all inputs. The class {\UL} was first defined and studied by Buntrock et al. \cite{BJLR91} and subsequently studied  by \`{A}lvarez and Jenner \cite{AJ93}. 
%History and citations of UL

Reinhardt and Allender showed that {\NL} and {\UL} are equal in a non-uniform setting \cite{RA00}. Now whether the classes are equal uniformly or not, is an important open question. In a subsequent paper, Allender et al. showed that, under a reasonable hardness assumption that deterministic linear space has functions that can not be computed by circuits of size $2^{\epsilon n}$, we obtain $\NL=\UL$ \cite{ARZ99}. Therefore it is very much likely that that the two classes are equal.

In their seminal paper, Reinhardt and Allender also gave a way of showing that directed graph reachability is in {\UL} \cite{RA00}. They showed that, if for a class of graphs, an efficient edge weight function can be designed, with respect to which the minimum weight path between any pair of vertices is unique in the graph, then the reachability problem for that class of graphs is in {\UL}. Since the result of Reinhardt and Allender, there has been significant progress on the {\NL} vs {\UL} problem. For various class of graphs such as planar graphs \cite{BTV09,TV12}, bounded genus graphs \cite{KV10,DKTV11,GST20}, $K_{3,3}$ and $K_5$ minor-free graphs \cite{TW09,AGGT16}, graphs with  polynomially many paths from the start vertex to every other vertex \cite{PTV12}, the reachability problem has been shown to be in {\UL}.

Kallampally and Tewari in 2016 showed that any problem in {\NL} is decidable by an unambiguous algorithm running in polynomial time and using $O(\log^2 n)$ space \cite{KT16}. The space bound was subsequently improved to $O(\log^{1.5} n)$ in a later paper \cite{MP19}. However if we restrict ourselves to deterministic computation, then the best space upper bound on reachability is $O(\log^2 n)$ due to Savitch \cite{savitch} (however the time required by Savitch's algorithm is quasipolynomial). Thus, there happens to be a trade-off between determinism and time.

In this work, we describe a new problem called $\kReach$, having two parameters $k$ and $f$, for which we can show a similar trade-off between determinism and time. For a collection of directed path graphs $P$, we call the union graph of these path graphs as the \emph{union graph of $P$}. Given a collection $P$ of $f$ directed paths, the $\kReach$ problem is to decide whether there exists a path from one vertex to another in the union graph $G$ of $P$ that switches amongst the paths in $P$ at most $k$ times. The $\kReach$ problem can arise in many natural scenarios. Say we consider the network of trains connecting a bunch a cities. We can represent the collection of train routes as a collection $P$ of directed paths, where the cities are represented as nodes. Asking the question whether one can travel from one city to another by switching trains at most $k$ times, is akin to deciding $\kReach$ in $P$.
% Savitch and USPACE(log^2 n)

\subsection{Our Results}
In \cref{sec:upper}, we show that $\kReach$ can be solved deterministically using $O(f\log n)$ space in polynomial time but this bound does not incorporate the parameter $k$. In \cref{sec:Extension}, we show that the $\kReach$ problem is logspace reducible to the problem of detecting whether there is a $k$-length path from one vertex to another in a directed graph. Hence, this problem can be solved deterministically in $O(\log n\log k)$ space using Savitch \cite{savitch} but the algorithm doesn't guarantee polynomial runtime. On the other hand, this problem can also be solved by an unambiguous non-deterministic machine using $O(\log n\sqrt{\log k})$ space in polynomial time\cite{MP19}. However, these bounds don't incorporate the parameter $f$.  

In \cref{sec:kreach}, we show that the problem of $\kReach$ can be solved deterministically using $O(k\log f+\log n)$ space but our algorithm doesn't guarantee polynomial time. We then show that the same problem can be solved by an unambiguous non-deterministic machine using $O(k\log f+\log n)$ space in polynomial time. The space bound achieved here incorporates both the parameters $k$ and $f$. Finally, in \cref{sec:compare}, we draw a comparison among the different bounds on the $\kReach$ problem. 

\section{Upper Bound in terms of $f$}
\label{sec:upper}

In this section, we give an algorithm for deciding $\kReach$ in $P$, a collection of $f$ directed paths, that runs in space $O(f\log n)$ and polynomial time, where $n$ is the number of vertices in the union graph $G$ of $P$. Hence we get the following theorem.

\begin{theorem}
\label{thm:main1}
Given a collection of paths $P$ consisting of $f$ directed paths, $\kReach$ in $P$ can be decided in $O(f\log n)$ space and polynomial time.
\end{theorem}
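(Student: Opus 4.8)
The plan is to reduce full reachability to a bounded-switch version and iterate. The key definition already in hand is $k$-reachability (Definition~\ref{def:kreach}): $t$ is $k$-reachable from $s$ if some $s$-$t$ path switches among the paths of the decomposition $P$ at most $k$ times. The first observation I would establish is that ordinary reachability coincides with $(k-1)$-reachability, or more safely with $n$-reachability: any simple $s$-$t$ path visits at most $n$ vertices, and along a single directed path $P_i$ the walk can proceed without switching, so a genuine $s$-$t$ path needs at most $n-1$ switches between the $k$ paths. Hence it suffices to decide whether $t$ is reachable from $s$ using at most some polynomial number of switches.

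\medskip
The core combinatorial idea is that a path that never switches is trivial to verify within a single directed path of $P$: given two vertices $u,v$ lying on the same directed path $P_i$, we can decide whether $u$ precedes $v$ on $P_i$ (hence $v$ is reachable from $u$ without switching) simply by scanning $P_i$, which costs only $O(\log n)$ space. The algorithm should therefore maintain a frontier consisting not of individual vertices but of the set of paths from $P$ that are currently ``active,'' together with the position reached on each. The central data structure is a vector indexed by the $k$ paths, recording for each path $P_i$ how far along $P_i$ the search has progressed; this vector is exactly what fits in $O(k\log n)$ space, since each of the $k$ entries is a position index requiring $O(\log n)$ bits. The search then repeatedly relaxes this frontier: from a vertex reachable on some path $P_i$, it may switch onto any other path $P_j$ that passes through that vertex, and then advance freely along $P_j$.

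\medskip
The algorithm I would write is a breadth-first relaxation over ``number of switches made.'' In round $r$ we know the set of vertices that are $r$-reachable from $s$, represented compactly by the per-path reach vector; in round $r+1$ we extend each active path to its full forward reach, examine every vertex newly covered, and for each such vertex activate every other path through it, updating the vector. Since a simple path needs at most $n-1$ switches, the number of rounds is bounded by $n$, and each round touches each of the $k$ paths and each of the $n$ vertices a polynomial number of times, yielding the claimed $O(n^3k^2)$ running time quoted in the introduction. Crucially, every intermediate quantity — the round counter, loop indices over paths and vertices, and the reach vector — fits inside $O(k\log n)$ space, so the monotone frontier never needs to store an explicit list of reached vertices.

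\medskip
The step I expect to be the main obstacle is proving correctness of the compact frontier representation, namely that the per-path reach vector faithfully captures the entire set of reachable vertices after $r$ switches, and that relaxing it for $n$ rounds stabilizes to exactly the reachable set. The subtlety is that the reach along a single path is an interval (everything forward of the furthest-back activated position), so the $O(k\log n)$ vector really does encode a potentially $\Theta(n)$-sized vertex set losslessly; I would prove by induction on $r$ that the set of vertices marked after round $r$ equals the set of $r$-reachable vertices, using the observation that any $s$-$t$ path decomposes into maximal single-path segments separated by switches. The walk-decomposition generalization noted in the abstract requires only replacing ``position on a simple path'' by ``position on a walk,'' and checking that within a single walk one vertex's forward reachability from another is still decidable by a single scan; since a walk may revisit vertices, the scan must account for the furthest forward occurrence, but this does not change the space bound.
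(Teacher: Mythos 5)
Your proposal is correct and follows essentially the same approach as the paper: maintain, for each of the $k$ paths, the earliest (furthest-back) position reached so far, iterate a BFS-style relaxation over the number of switches, prove by induction on the round number that the per-path frontier exactly captures the set of $r$-reachable vertices, and bound the number of rounds by $n$. The key insight you identify --- that reachability within a single path is a forward interval, so one $O(\log n)$-bit index per path losslessly encodes the frontier --- is precisely the observation the paper's algorithm and its correctness lemma rest on.
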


\begin{proof}
We give an algorithm to decide reachability from one vertex $s$ to another vertex $t$ in the union graph $G$ of $P$. 

Whenever we will refer to the earliest vertex in a path $P_i$ having some property, we mean the vertex closest to the source vertex in $P_i$ having the said property.

The algorithm is as follows.

\begin{adjustwidth}{30pt}{10pt}
\begin{enumerate}

\item Keep two registers $c_i$ and $d_i$ corresponding to each directed path $P_i$ in $P$. (Since there are $f$ directed paths in $P$, therefore we need to keep $2f$ registers.)

\item Check if $t$ comes after $s$ in any of the directed paths in $P$. If yes, then halt and declare $t$ to be reachable from $s$. If not, then proceed to the next step.

\item For each directed path $P_i$ in $P$, check if $s$ is present in $P_i$. If yes, then store the index of $s$ in register $c_i$. Otherwise, initialize $c_i$ as NULL. Initialize $d_i$ as NULL for all $i$. Let us call the vertex indicated by the label in $c_i$ as $P_i[c_i]$. If $c_i$ is NULL, then $P_i[c_i]$ is also considered to be NULL.

\item For all $1\leq i\leq f$, check if $t$ occurs after $P_i[c_i]$ in path $P_i$. If yes, then halt and declare $t$ to be reachable from $s$. If not, then proceed to the next step. (If $P_i[c_i]$ is NULL, the answer to the above question is NO by default.)

\item For each directed path $P_j$ in $P$, find the earliest vertex $v_j$ in $P_j$, such that $v_j$ occurs after $P_i[c_i]$ (vertex indicated by the label in register $c_i$) in path $P_i$ for some $i$. Store the index of $v_j$ in register $d_j$. 

\item For all $i$, update $c_i=d_i$.

\item Repeat steps 4-6 until $t$ is found to be reachable from $s$ or values of $c_i$'s don't change anymore.

\end{enumerate}
\end{adjustwidth}

The algorithm uses a BFS kind of approach. The algorithm finds out all the vertices in $G$ that are $l$-reachable from $s$ within its $l$ iterations. In other words, any vertex that is $l$-reachable from $s$ will be detected as such by the algorithm within its first $l$ iterations. However, unlike standard BFS, we don't need to maintain a bit for every vertex indicating whether the vertex has been found to be reachable from $s$ or not. Rather, it suffices to maintain only the index of the earliest vertex in each path that is reachable from $s$. This is because a vertex $v$ in a path $p$ can be reachable from $s$ if and only if $v$ lies after the earliest vertex in $p$ that is reachable from $s$. Thus, we can check whether any vertex $v$ in $p$ is reachable from $s$ or not in logspace by simply traversing $p$.

In its $l$-th iteration, the algorithm finds out the earliest vertex $v_i$ in each path $p_i$ (for all $i$), that is $(l-1)$-reachable from $s$. All vertices that occur after vertex $v_i$ in path $p_i$ are $l$-reachable from $s$. The algorithm then checks if $t$ occurs after $v_i$ in path $p_i$ (for any $i$) or not. Thus, by iterating sufficient number of times, the algorithm is guaranteed to detect if $t$ is reachable from $s$. The algorithm halts when there is no change in the earliest reachable vertices for any of the paths, thus indicating that all vertices that are reachable from $s$ have already been detected.         

Hence, running the algorithm for a sufficient (at most $k$) number of iterations ensures that $t$ will be declared reachable from $s$ if $t$ is indeed $k$-reachable from $s$. If $t$ is not $k$-reachable from $s$, the algorithm can never declare $t$ to be $k$-reachable from $s$.

In this algorithm, we keep track of the earliest vertex reachable from $s$ in each of $f$ directed paths in $P$. We require $O(\log n)$ space to store the information of each vertex. Therefore we can store $f$ of them in $O(f\log n)$ space. Our algorithm works in deterministic $O(f\log n)$ space.

Each iteration of the algorithm takes polynomial (in $n$ and $f$) time. The algorithm can have $O(k)$ iterations in the worst case. The value of $f$ is $O(n^2)$ and that of $k$ is $O(n)$. Hence, the algorithm runs in polynomial time for all values of $k$ and $f$.
\end{proof}

\section{Upper Bounds on {\kReach} in terms of $k$}
\label{sec:Extension}

In this section, we show how the known upper bound for deciding reachability in a graph using unambiguous nondeterministic \cite{MP19} and deterministic computation \cite{savitch} can be extended to give an upper bound for the $\kReach$ problem.

First, we give a reduction from the $\kReach$ problem to the problem of deciding reachability in a layered digraph having $k+1$ layers. 

\begin{lemma}
\label{thm:dtold}
Given a collection of directed paths $P$, the $\kReach$ problem in $P$ is logspace reducible to the problem of deciding reachability in a layered digraph having $k+1$ layers.
\end{lemma}

\begin{proof}
Let $G$ be the union graph of $P$. Construct a layered digraph $L$ having $k+1$ layers, such that each layer has $n$ nodes. In total $L$ has $n(k+1)$ nodes. Let $x_i$ denote the $i$th node in $G$ and $y_{ij}$ denote the $j$th node in $i$th layer in $L$. For every pair of vertices $(x_p,x_q)$ in $G$, we have the following edges in $L$- $(y_{1p},y_{2q}), (y_{2p},y_{3q}), (y_{3p},y_{4q}),\dots, (y_{(k)p},y_{(k+1)q})$, if and only if $x_q$ comes after $x_p$ in some path in $P$. For every vertex ${x_m}$ in $G$, we include the following edges in $L$- $(y_{1m},y_{2m}),(y_{2k},y_{3k}),\dots,(y_{(n-1)k},y_{nk})$. The construction of $L$ from $G$ can be done in logspace. The problem of deciding the reachability from vertex $x_u$ to vertex $x_v$ in $G$ via a path that switches amongst the paths in $P$ at most $k$ times now reduces to the problem of deciding the reachability from vertex $y_{1x}$ to vertex $y_{(k+1)y}$ in the layered digraph $L$. Thus, $x_v$ is $k$-reachable from $x_u$ in $P$ if and only if $y_{(k+1)v}$ is reachable from $y_{1u}$ in $L$.
\end{proof}

Reachability in a layered digraph having $k$ layers can be solved in $O(\log k\log n)$ deterministic space\cite{savitch} and also in unambiguous space $O(\sqrt{\log k}\log n)$ in polynomial time\cite{MP19}. Thus, we get the following bounds on the space complexity of $\kReach$.

\begin{theorem}
Given a collection $P$ of $f$ directed paths, $\kReach$ can be decided in $P$ deterministically using $O(\log n\log k)$ space.
\end{theorem}

\begin{theorem}
\label{thm:kreach2}
Given a collection $P$ of $f$ directed paths, $\kReach$ can be decided in $O(\log n\sqrt{\log k})$ space and polynomial time by an unambiguous (and co-unambiguous) nondeterministic Turing machine. 
\end{theorem}
 
\section{Upper Bounds on {\kReach} in terms of both $f$ and $k$ together}
\label{sec:kreach}

In \cref{sec:Extension}, we show the upper bounds on $\kReach$ that can be derived from known results in both deterministic \cite{savitch} and unambiguous setting \cite{MP19}. However, this is not apparent how to take into account the parameter $f$ into consideration. In \cref{sec: Algorithm 2}, we provide an upper bound on the complexity of the $\kReach$ problem in deterministic setting, in terms of both the parameters $f$ and $k$. However, it doesn't guarantee polynomial runtime. In \cref{sec: Algorithm 3}, we show that in unambiguous setting, the same space complexity upper bound as in \cref{sec: Algorithm 2} can be achieved in polynomial runtime. The bounds in \cref{sec: Algorithm 2} and \cref{sec: Algorithm 3} are kind of a trade-off between runtime and determinism, also shown earlier in \cite{KT16}.

First we define a few concepts which will be utilized in the rest of the section. Let $P$ be a collection of $f$ directed paths and $G$ be the union graph of $P$. For a path $p$ from vertex $s$ to vertex $t$ in $G$, we define the function $\text{switch}(p)$ to be the number of switches $p$ makes amongst the paths in $P$. For a vertex $x$, we define $d(x)$ to be the minimum value of switch($p$) such that $p$ is a path from $s$ to $x$ in $G$. We also call $d(x)$ to be the distance of $x$ from $s$.

We also define the function seq($p$) as a number, which when represented in base $f$ consists of $\text{switch}(p)$ digits. The $i$-th digit in seq($p$), when represented in base $f$, is the index of the path in $P$ that the path $p$ utilizes during its $i$-th switch among the paths in $P$. We require $O(k\log f)$ bits to represent seq($p$) for any path $p$.

\subsection{Deterministic Algorithm for \texorpdfstring{$\kReach$}{k-Reach}}
\label{sec: Algorithm 2}

We show that $k$-reachability in a collection $P$ of $f$ directed paths can be solved deterministically using $O(k\log f+\log n)$ space. We use a DFS sort of approach to achieve this bound. We iterate over all permutations of $k$ paths from $P$ and check for each permutation $p_1p_2\dots p_k$, whether there exists a path from $s$ to $t$ in the union graph $G$ of $P$, such that seq($p$)=$p_1p_2\dots p_k$. We give a deterministic polytime routine, which for a given value $w$, checks for the existence of a path $p$ such that seq($p$)=$w$.

\begin{theorem}
\label{thm:kreach1}
Given a collection $P$ of $f$ directed paths, $k$-reachability in $P$ can be decided deterministically in $O(k\log f+\log n)$ space.
\end{theorem}

\begin{proof}
We keep a register $c$ consisting of $k\log f$ bits, which is structured as $k$ groups of $\log f$ bits each. We call a particular configuration of $c$ to be \emph{valid} if each group of $\log f$ bits in $c$ is a label for some path in $P$. We call the $i$-th group of $\log f$ bits in $c$ as $c_i$. We call the path in $P$ corresponding to some label $l$ as $P[l]$.

The algorithm is as follows.

\begin{adjustwidth}{30pt}{10pt}

\begin{enumerate}

\item  Check if $t$ comes after $s$ in any of the paths in $P$. If yes, then halt and declare $t$ to be reachable from $s$. Otherwise, repeat steps 2 to 5 for all possible valid configurations of the register $c$.

\item Initialize register $d=s$ (that is, $d$ contains the label of vertex $s$) and register $i=0$. Let $v_d$ be the vertex indicated by the label in $d$. We denote the path in $P$ indicated by the label in register $c_i$ as $P[c_i]$.

\item If $i=0$, then check if $s$ is in the path $P[c_0]$. If $i>0$, then find (if any) the earliest vertex in the path $P[c_i]$ that occurs after $v_d$ in the path $P[c_{i-1}]$. Update the label of this vertex in register $d$. In both the above cases, if the step is not successful, then break and try the next valid configuration of $c$ (step 1). Otherwise, check if $t$ comes after $v_d$ in path $P[c_i]$. If yes, then halt and accept. Else, update $i=i+1$ and proceed. 

\item Repeat steps 2 and 3 $k$ times.

\item If $t$ is not found to be reachable from $s$ for any of the possible valid configurations of $c$, then halt and reject.
\end{enumerate}

\end{adjustwidth}

The above algorithm accepts if and only if $t$ is $k$-reachable from $s$ in $P$. It is easy to see that the algorithm accepts only when there is a path from $s$ to $t$ that switches amongst the paths in $P$ at most $k$ times. For the other direction, let us assume that there exists a path $p$ from $s$ to $t$ that switches amongst the paths in $P$ at most $k$ times. In the iteration of the algorithm when seq($p$) occurs as a prefix in the content of the register $c$, the algorithm must accept. This is because step 3 of the algorithm always finds the earliest vertex in a path $P[c_i]$ that is $i-1$-reachable from $s$ via a path whose seq is $c_0c_1\dots c_{i-1}$. By doing this, step 3 of the algorithm discovers all vertices (the vertices coming after the earliest vertex) in path $P[c_i]$  that are $(i-1)$-reachable from $s$ via a path whose seq is $c_0c_1\dots c_{i-1}$. 

We see that this algorithm takes $O(k\log f + \log n)$ space, because it requires $O(k\log f)$ space for the register $c$ and $O(\log n)$ space for the rest of the registers.
\end{proof}

\begin{remark}
The algorithm in the proof of \cref{thm:kreach1} is not guaranteed to work in polynomial time for all $k$ and $f$ since the algorithm iterates over all $f^k$ possible valid configurations of the register $c$.
\end{remark}

\subsection{Unambiguous Nondeterministic Algorithm for \texorpdfstring{$\kReach$}{k-Reach}}
\label{sec: Algorithm 3}
We now provide an unambiguous nondeterministic algorithm for $\kReach$ problem, which works in polynomial time. This is a modified version of the double-inductive counting algorithm provided by Reinhardt and Allender \cite{RA00} to decide reachability in a min-unique graph in unambiguous logspace. 

\begin{theorem}
\label{thm:kreach3}
Given a collection $P$ of $f$ directed paths, $k$-reachability in $P$ can be decided in $O(k\log f+\log n)$ space by an unambiguous (and co-unambiguous) nondeterministic machine in polynomial time.
\end{theorem}

\begin{proof}
Let $s$ and $t$ be two vertices in $P$. Let $c_h$ be the number of vertices that are reachable from $s$ via paths that have switch$\leq h$. The weight of any path $p$ from $s$ to some vertex $v$ in $G$ is defined to be equal to seq($p$). Let $\Sigma_h$ be the sum of the weights of the minimum weight paths to all vertices for which there is a path from $s$ that has switch$\leq h$.

The algorithm has three subroutines. The first subroutine takes as input $c_h$, $\Sigma_h$ and a vertex $v$ and decides unambiguous nondeterministically if d$(v)\leq h$. It also returns the weight of the minimum weight path from $s$ to $v$ if d$(v)\leq h$. The second subroutine takes $c_{h-1}$ and $\Sigma_{h-1}$ and computes $c_h$ and $\Sigma_h$ using subroutine 1. The third subroutine inductively computes $c_h$ and $\Sigma_h$ for all $1\leq h\leq k$, using the second subroutine, and finally invokes the first subroutine to decide if d$(t)\leq k$. The inductive counting happens over the value of the switch of the paths from $s$ to other vertices. That is, at each step, we construct a ball of radius $h$ around $s$ and compute using $c_{h-1},\Sigma_{h-1}$, the number of vertices, $c_h$, that are at a distance of $h$ from $s$ and the sum, $\Sigma_k$, of the weights of the minimum length paths from $s$ to all those vertices. Finally, we invoke the first subroutine to decide if $t$ is reachable from $s$ via a path of switch $k$.

The Reinhardt-Allender Algorithm requires the weight function to be a min-unique weight function (that is there should be a unique minimum weight path under this weight function) in order to work. This may not always be true for our weight function. There may be multiple minimum weight paths in the graph according to our weight function. However, we have a way to get around that.  Given a vertex $v$, we nondeterministically guess the weight of a path from $s$ to $v$. We then follow a deterministic procedure to determine if there is any path of that guessed weight from $s$ to $v$ in a similar manner as in steps 2-4 in the proof of \cref{thm:kreach1}. This checking can be done deterministically in polynomial time using $O(h\log f+\log n)$ space, as already shown in the proof of \cref{thm:kreach1}. Thus, even if there are multiple paths of a particular weight, our algorithm still works unambiguously. Also, each of the subroutines work in polynomial time, irrespective of the values of $f$ and $k$. Thus, the overall Algorithm which calls the subroutine 2 some $k$ number of times also works in polynomial time.  

\begin{algorithm}[H]
\caption{Determine if d$(v)\leq h$}
\begin{algorithmic}[1]
\Function{Weight}{$v,h,c_{h},\Sigma_{h}$}

\For{each $x\in V$ }
\State nondeterministically guess if d$(x)\leq h$
\If {guess is yes}
\State guess a value of d$(x)$, say $l\leq h$, and also a sequence of indices $i_1,i_2,\dots,i_l$ and check if there is a path $p$ from $s$ to $x$ such that seq($p$)= $i_1,i_2,\dots,i_l$
\If {guess is correct}
\State $count=count+1$
\State $sum=sum+i_1*f^{l-1}+i_2*f^{l-2}+\dots+i_l$
\If {$x==v$}
\State $path.to.v$=$true$, $\sigma=i_1*f^{l-1}+i_2*f^{l-2}+\dots+i_l$
\EndIf
\Else
\State \Return "?" 
\EndIf

\EndIf

\EndFor
\If {$count==c_h$ and $sum==\Sigma_h$}
\State \Return $path.to.v$, $\sigma$
\Else
\State \Return "?"
\EndIf
\EndFunction
\end{algorithmic}
\end{algorithm}
     
\begin{algorithm}[H]
\caption{Computing $c_h$ and $\Sigma_h$}

\begin{algorithmic}[1] 
\Require {$h,c_{h-1},\Sigma_{h-1}$}

\State Initialize $c_h\leftarrow c_{h-1}$, $\Sigma_h\leftarrow \Sigma_{h-1}$

\For{each $v\in V$ }
\State Initialize $flag\leftarrow0$, $\sigma\leftarrow\infty$
\State $path.to.v,z=$\Call{Weight}{$v,h-1,c_{h-1},\Sigma_{h-1}$}
\If {$path.to.v==false$}
\For{each $x$ in $G$}
\State $path.to.x,w=$\Call{Weight}{$x,h-1,c_{h-1},\Sigma_{h-1}$}
\If {$path.to.x==true$}
\If{$v$ is reachable from $x$ via a path $p$ such that switch($p$)=0}
\State $flag=1$
\If {$w*f+i<\sigma$}
\State $\sigma=w*f+i$, where $i$ is the smallest index of the path in $P$ via which $v$ is reachable from $x$.
\EndIf
\EndIf
\EndIf
\EndFor
\EndIf
\If {$flag>0$}
\State $c_h=c_h+1$
\State $\Sigma_h=\Sigma_h+\sigma$
\EndIf

\EndFor

\Return {} $c_h$, $\Sigma_h$

\end{algorithmic}
\end{algorithm}

\begin{algorithm}[H]
\caption{Determining if there exists a path from $s$ to $t$ in $G$}
\renewcommand{\Require}{\textbf{Input:}}
\renewcommand{\Ensure}{\textbf{Output:}}
\begin{algorithmic}[1] 

\State Initialize $c_0\leftarrow 1$, $\Sigma_0\leftarrow 0, h \leftarrow 0$

\For{$h=1\dots k$ }
\State Compute $c_h$ and $\Sigma_h$ by invoking Algorithm 2 on $(h,c_{h-1},\Sigma_{h-1})$

\EndFor

\State Invoke Algorithm 1 on $(t,k,c_k,\Sigma_k)$ and return its value

\end{algorithmic}
\end{algorithm}
 
\end{proof}

\section{Comparison among the Upper Bounds for $\kReach$}
\label{sec:compare}
The bounds in \cref{sec:kreach} for the $\kReach$ problem are worse than the bounds in \cref{sec:upper} and \cref{sec:Extension} for most values of the parameters $f$ and $k$. However, for certain restricted settings of the parameters $k$ and $f$, the bounds in \cref{sec:kreach} perform better than the other bounds and have interesting implications. In this section, we will highlight two such cases.

\begin{itemize}

\item \underline{\textbf{Case 1}}:- $k=\sqrt{\log n}$, $f=2^{\sqrt{\log n}}$

The comparison of the different bounds in this case is summarized in \cref{table:tab1}.

\begin{table}[h!]
\begin{tabular}{|c|ll|l|l}
\cline{1-4}
\textbf{Setting} & \multicolumn{2}{c|}{\textbf{\begin{tabular}[c]{@{}c@{}}Space Complexity\\ in terms of $f,k,n$\end{tabular}}} & \multicolumn{1}{c|}{\textbf{\begin{tabular}[c]{@{}c@{}}Space Complexity\\ in terms of $n$\end{tabular}}} &  \\ \cline{1-4}
Deterministic (\cref{sec:Extension}) & \multicolumn{2}{l|}{$O(\log n\log k)$} & $O(\log n\log\log n)$ &  \\ \cline{1-4}
Unambiguous Polytime (\cref{sec:Extension}) & \multicolumn{2}{l|}{$O(\log n\sqrt{\log k})$} & $O(\log n\sqrt{\log\log n})$ &  \\ \cline{1-4}
\begin{tabular}[c]{@{}c@{}}Deterministic Polytime\\ (\cref{sec:upper})\end{tabular} & \multicolumn{2}{l|}{$O(f\log n)$} & $O(2^{\sqrt{\log n}}\log n)$ &  \\ \cline{1-4}
\begin{tabular}[c]{@{}c@{}}Deterministic (Not polytime)/\\ Unambiguous Polytime\\ (\cref{sec:kreach})\end{tabular} & \multicolumn{2}{l|}{$O(k\log f+\log n)$} & $O(\log n)$ &  \\ \cline{1-4}
\end{tabular}
\caption{}
\label{table:tab1}
\end{table}

In this case, the bound from \cref{sec:kreach} evaluates to $O(\log n)$. All the other bounds remain super-logarithmic. However, since the space consumption by our deterministic routine is $O(\log n)$, it also runs in polytime and hence, no separate unambiguous routine is necessary.

\item \underline{\textbf{Case 2}}:- $k=\log n(\log\log n)^{0.2}$, $f=2^{(\log\log n)^{0.2}}$

The comparison of the different bounds in this case is summarized in \cref{table:tab2}.

 \begin{table}[]
 \label{tab:2}
\begin{tabular}{|c|ll|l|l}
\cline{1-4}
\textbf{Setting} & \multicolumn{2}{c|}{\textbf{\begin{tabular}[c]{@{}c@{}}Space Complexity\\ in terms of $f,k,n$\end{tabular}}} & \multicolumn{1}{c|}{\textbf{\begin{tabular}[c]{@{}c@{}}Space Complexity\\ in terms of $n$\end{tabular}}} &  \\ \cline{1-4}
Deterministic (\cref{sec:Extension}) & \multicolumn{2}{l|}{$O(\log n\log k)$} & $O(\log n\log\log n)$ &  \\ \cline{1-4}
Unambiguous Polytime (\cref{sec:Extension}) & \multicolumn{2}{l|}{$O(\log n\sqrt{\log k})$} & $O(\log n\sqrt{\log\log n})$ &  \\ \cline{1-4}
\begin{tabular}[c]{@{}c@{}}Deterministic Polytime\\ (\cref{sec:upper})\end{tabular} & \multicolumn{2}{l|}{$O(f\log n)$} & $O(2^{(\log\log n)^{0.2}}\log n)$ &  \\ \cline{1-4}
\begin{tabular}[c]{@{}c@{}}Deterministic (Not polytime)/\\ Unambiguous Polytime\\ (\cref{sec:kreach})\end{tabular} & \multicolumn{2}{l|}{$O(k\log f+\log n)$} & $O(\log n(\log\log n)^{0.4})$ &  \\ \cline{1-4}
\end{tabular}
\caption{}
\label{table:tab2}
\end{table}

In this case also, the bound from \cref{sec:kreach} performs better than all the other bounds. Moreover, since the space consumption by our deterministic routine is superlogarithmic, it does not run in polytime and hence, our unambiguous routine finds use here which runs in polytime with the same space bound.

\end{itemize}
%% The Appendices part is started with the command \appendix;
%% appendix sections are then done as normal sections
%% \appendix

%% \section{}
%% \label{}

%% References
%%
%% Following citation commands can be used in the body text:
%% Usage of \cite is as follows:
%%   \cite{key}         ==>>  [#]
%%   \cite[chap. 2]{key} ==>> [#, chap. 2]
%%

%% References with BibTeX database:

\bibliography{references}

%% Authors are advised to use a BibTeX database file for their reference list.
%% The provided style file elsarticle-num.bst formats references in the required Procedia style

%% For references without a BibTeX database:

% \begin{thebibliography}{00}

%% \bibitem must have the following form:
%%   \bibitem{key}...
%%

% \bibitem{}

% \end{thebibliography}

\end{document}